\providecommand{\U}[1]{\protect\rule{.1in}{.1in}}
\newtheorem{lemma}{Lemma}
\name{M. Rossi,$^{\ast }$ A. M. Tulino,$^{\dag }$ O. Simeone,$^{\ast }$\thanks{The work of O. Simeone was partially supported by the U. S. National Science Foundation under Grant \# CCF-0914899.} and A. M. Haimovich $^{\ast }$}
\address{$^{\ast }$CWCSPR, New Jersey Institute of Technology, Newark, NJ 07102, USA\\
$^{\dag }$Bell Laboratories (Alcatel-Lucent), Holmdel, NJ 07733, USA.\\
{\small{\texttt{\{mr227,osvaldo.simeone,alexander.m.haimovich\}@njit.edu},}}\\
{\small{\texttt{a.tulino@alcatel-lucent.com}}}}
\begin{document}

\title{Non-Convex Utility Maximization in Gaussian\\MISO Broadcast and Interference Channels}
\author{}
\maketitle

\begin{abstract}
\ninept Utility (e.g., sum-rate) maximization for multiantenna broadcast and
interference channels (with one antenna at the receivers) is known to be in
general a non-convex problem, if one limits the scope to linear (beamforming)
strategies at transmitter and receivers. In this paper, it is shown that,
under some standard assumptions, most notably that the utility function is
decreasing with the interference levels at the receivers, a global optimal
solution can be found with reduced complexity via a suitably designed
branch-and-bound method. Although infeasible for real-time implementation,
this procedure enables a non-heuristic and systematic assessment of suboptimal
techniques. In addition to the global optimal scheme, a real-time suboptimal
algorithm, which generalizes the well-known distributed pricing techniques, is
also proposed. Finally, numerical results are provided that compare global
optimal solutions with suboptimal (pricing) techniques for sum-rate
maximization problems, affording insight into issues such as the robustness
against bad initializations in real-time suboptimal strategies.

\end{abstract}

\ninept

\begin{keywords}
Nonconvex optimization, branch-and-bound, interference channel, multiple-input single-output channel
\end{keywords}

\section{Introduction}

\label{sec:intro}

Precoding and power control are well studied strategies that support high
spectral efficiency in wireless network with multiple antenna transceivers,
when channel state information (CSI) is available at the transmitters. Several
system-wide objective functions have been considered in the literature for
precoding and power control optimization of broadcast channels (BCs) and
interference channels (ICs). Some of these problems are convex, for example
power minimization \cite{bib: Trans Power} or SINR balancing for the
multiple-input single-output (MISO) BC \cite{bib: boche sinr}, and thus
solvable with standard techniques in reasonable (polynomial) time. However, in
general, the problems at hand are non-convex. Unlike convex problems,
non-convex problems typically do not afford efficient (i.e., polynomial-time)
algorithms that are able to achieve global optimality \cite{bib: boyd}. For
example, it is known that the weighted sum-rate maximization (WSRM) in
parallel IC channels, where interference from other users is treated as noise
(a non-convex problem) is NP-hard \cite{bib: complexity and duality} (this
result extends also to BC as a special case).

In this paper, we address the global minimization of a system-wide, in general
non-convex, cost function with respect to the transmit covariance matrices
$\left\{  \mathbf{Q}_{k}\right\}  $. Among global techniques, branch-and-bound
(BB)\ algorithms are methods to solve general\ non-convex problems \cite{bib:
boyd 1991}, producing an $\varepsilon$-suboptimal feasible point. BB methods
have been already introduced to solve non-convex power control problems,
although so far only multi-user single-input single-output systems have been
addressed in \cite{bib: ngoc} and references therein.

In this paper, we propose a novel BB framework for global optimization of a
problem formulation that includes, for instance, MISO BC and IC WSRM with
general convex power constraint. The proposed BB approach is based on the
observation that a fairly general set of cost functions that arise in
communication's problems, albeit non-convex, possess a \textit{Partly
Convex-Monotone} \cite{bib: tuy decomposition}\ structure. This structure is
satisfied whenever one can identify a suitable set of interference functions
$\mathbf{\mathbf{f}}_{\mathbf{i}}\left(  \left\{  \mathbf{Q}_{k}\right\}
\right)  $, for which the following hold: (\textit{i}) The cost function is
\textit{convex} in the transmit covariance matrices $\left\{  \mathbf{Q}%
_{k}\right\}  $ once the interference functions $\mathbf{\mathbf{f}%
}_{\mathbf{i}}\left(  \left\{  \mathbf{Q}_{k}\right\}  \right)  $ are fixed;
(\textit{ii}) The cost function is \textit{monotone} in the interference
functions $\mathbf{\mathbf{f}}_{\mathbf{i}}\left(  \left\{  \mathbf{Q}%
_{k}\right\}  \right)  $. We design the BB scheme to exploit the
\textit{Partly Convex-Monotone} structure of the problem. Branching is
performed in a reduced space (of the size of the set of all feasible
interference level vectors $\mathbf{\mathbf{f}}_{\mathbf{i}}\left(  \left\{
\mathbf{Q}_{k}\right\}  \right)  $), instead of the original
feasible$\mathcal{\ }$space (of the size of the set of all feasible covariance
matrices $\left\{  \mathbf{Q}_{k}\right\}  $).\ Bounding is efficiently
carried out by solving only convex optimization problems.

In addition to the reduced-space BB method, we propose a suboptimal algorithm
that attains quasi-optimal performance with polynomial complexity. This
algorithm reduces to the distributed pricing scheme of \cite{bib: berry honig
pricing MISO}, when applied to sum-rate maximization problems. Numerical
results are provided to compare the global optimal solution based on BB, the
suboptimal (pricing) technique and the nonlinear dirty-paper coding scheme.

\textit{Notation}: The Boldface is used to denote matrices (uppercase) and
vectors (lowercase); $\left(  \cdot\right)  ^{T}$ and $\left(  \cdot\right)
^{H}$ denote the transpose and the Hermitian transpose, respectively;
$\operatorname{Tr}\left(  \cdot\right)  $ denotes the trace of a matrix;
$E\left[  \cdot\right]  $ denotes the expectation operator. Moreover, given a
vector $\mathbf{x}$ we address its $l$-th component as $\left[  \mathbf{x}%
\right]  _{l} $, and the vector inequality $\mathbf{x}\preccurlyeq
\mathbf{\mathbf{\mathbf{y}}}$ means that $\left[  \mathbf{x}\right]  _{l}%
\leq\left[  \mathbf{y}\right]  _{l}$ $\forall l$. Finally, unless otherwise
specified, we address the set of covariance matrices $\left\{  \mathbf{Q}%
_{k}\right\}  _{k=1}^{K}$ as $\left\{  \mathbf{Q}_{k}\right\}  $.

\section{System Model\label{system_model}}

We model a multi-user communication system consisting of $K$
transmitter-receiver pairs (or users). The $k$-th user has $N_{k}$\ transmit
antennas and one receive antenna (MISO system). The signal at the $k$-th
receiver is given by%
\begin{equation}
y_{k}=\underset{\text{\textit{signal}}}{\underbrace{\mathbf{h}_{kk}%
\mathbf{x}_{k}}}+\underset{\text{\textit{interference}}}{\underbrace
{\sum\nolimits_{j\neq k}\mathbf{h}_{jk}\mathbf{x}_{j}}}+w_{k}\label{eq: model}%
\end{equation}
where $\mathbf{x}_{k}\in\mathbb{C}^{N_{k}\times1}$ is the $k$-th transmitter's
signal, $\mathbf{h}_{kj}\in\mathbb{C}^{1\times N_{k}}$ accounts for the
channel response of the MISO link between the $k$-th transmitter and the
$j$-th receiver, and $w_{k}\in\mathbb{C}^{1\times1}$ models the additive white
Gaussian noise (AWGN) at $k$-th receiver: $w_{k}\sim\mathcal{CN}(0,\sigma
_{k}^{2})$. Assuming capacity-achieving Gaussian codebooks, we define the
correlation matrix of the $k$-th transmitted signal as $\mathbf{Q}%
_{k}=E\left[  \mathbf{x}_{k}\mathbf{x}_{k}^{H}\right]  $. While model
(\ref{eq: model}) accounts for an IC, a BC can be obtained as a special case
by setting $\mathbf{h}_{jk}=\mathbf{h}_{k}$\ $\forall j,k$.

\subsection{Problem Formulation\label{Sec_optimal_linear_precoder}}

Due to multi-user interference, the system performance depends on the
transmission strategy of every user, i.e., on the set of covariance matrices
$\left\{  \mathbf{Q}_{k}\right\}  $. We consider the minimization with respect
to $\left\{  \mathbf{Q}_{k}\right\}  $ of a system-wide cost function $f$ (to
be defined below) under a\ general convex set constraints $\mathcal{Q}$:%
\begin{equation}
\underset{\left\{  \mathbf{Q}_{k}\right\}  \in\mathcal{Q}}{\min}f\left(
\left\{  \mathbf{Q}_{k}\right\}  ,\mathbf{\mathbf{f}}_{\mathbf{i}}\left(
\left\{  \mathbf{Q}_{k}\right\}  \right)  \right) \label{eq: f(Q_W)}%
\end{equation}
By defining a set of $L$ auxiliary variables $\mathbf{i}$, problem
(\ref{eq: f(Q_W)}) can be recast in the equivalent form%
\begin{align}
\left(  \mathrm{P}\right)  \text{\hspace{0.1in}}\min_{\left\{  \mathbf{Q}%
_{k}\right\}  \in\mathcal{Q},\mathbf{i}}f\left(  \left\{  \mathbf{Q}%
_{k}\right\}  ,\mathbf{i}\right)   & \label{eq: Problem_auxiliar_i}\\
s.t.\text{\ \ }\mathbf{i}={\mathbf{f}}_{\mathbf{i}}\left(  \left\{
\mathbf{Q}_{k}\right\}  \right)   & \nonumber
\end{align}
The equivalence means that if $\left\{  \mathbf{Q}_{k}^{\ast}\right\}  $ is a
solution to (\ref{eq: f(Q_W)}), then $\left(  \left\{  \mathbf{Q}_{k}^{\ast
}\right\}  ,\mathbf{\mathbf{f}}_{\mathbf{i}}\left(  \left\{  \mathbf{Q}%
_{k}^{\ast}\right\}  \right)  \right)  $ is a solution to
(\ref{eq: Problem_auxiliar_i}). Conversely, if $\left(  \left\{
\mathbf{Q}_{k}^{\ast}\right\}  ,\mathbf{\mathbf{i}}^{\mathbf{\mathbf{\ast}}%
}\right)  $ is a solution to (\ref{eq: Problem_auxiliar_i}), then $\left\{
\mathbf{Q}_{k}^{\ast}\right\}  $ is a solution to (\ref{eq: f(Q_W)}).\newline
We further make the following assumptions:

\begin{enumerate}
\item[A1] The $L$ interference levels are given by the real vector function
$\mathbf{\mathbf{f}}_{\mathbf{i}}\left(  \left\{  \mathbf{Q}_{k}\right\}
\right)  $, \textit{affine}\ with respect to\ $\left\{  \mathbf{Q}%
_{k}\right\}  $, that is bounded in the $L$-dimensional rectangle $\left[
\mathbf{i}^{\min},\mathbf{i}^{\max}\right]  \subset\mathbb{R}^{L}$ (i.e., the
$l$-th component satisfies $i_{l}^{\min}\leq\left[  \mathbf{\mathbf{f}%
}_{\mathbf{i}}\left(  \left\{  \mathbf{Q}_{k}\right\}  \right)  \right]
_{l}\leq i_{l}^{\max} $ for $l=1,\ldots,L$). For instance, we typically have
$L=K$ and the interference level at the $k$-th receiver reads $\left[
\mathbf{\mathbf{f}}_{\mathbf{i}}\left(  \left\{  \mathbf{Q}_{k}\right\}
\right)  \right]  _{k}=\sum\nolimits_{j=1,j\neq k}^{K}\mathbf{\mathbf{h}}%
_{jk}\mathbf{Q}_{j}\mathbf{\mathbf{h}}_{jk}^{H}$;

\item[A2] The cost function $f\left(  \left\{  \mathbf{Q}_{k}\right\}
,\mathbf{i}\right)  $ is a real scalar function that is: \textit{continuous}
in $\left(  \left\{  \mathbf{Q}_{k}\right\}  ,\mathbf{i}\right)  $;
\textit{monotonic increasing}\footnote{By suitably modifying the same
arguments, the proposed framework can handle an analogous but more general
case where $f\left(  \left\{  \mathbf{Q}_{k}\right\}  ,\mathbf{i}^{\mathbf{+}%
},\mathbf{i}^{\mathbf{-}}\right)  $ results \textit{monotone increasing} in
$\mathbf{i}^{\mathbf{+}}$ and \textit{monotone decreasing} in $\mathbf{i}%
^{\mathbf{-}}$.} with respect to\ $\mathbf{i}\in\left[  \mathbf{i}^{\min
},\mathbf{i}^{\max}\right]  $ for\ fixed $\left\{  \mathbf{Q}_{k}\right\}
\in\mathcal{Q}$; \textit{convex} with respect to $\left\{  \mathbf{Q}%
_{k}\right\}  $ for fixed $\mathbf{i}\in\left[  \mathbf{i}^{\min}%
,\mathbf{i}^{\max}\right]  $;

\item[A3] The set $\mathcal{Q}$ is \textit{closed} and \textit{convex}. For
example, $\mathcal{Q}$ may be the set of positive semidefinite covariance
matrices $\left\{  \mathbf{Q}_{k}\succcurlyeq\mathbf{0}\right\}  $ satisfying
the generalized power constraints $\sum\nolimits_{k=1}^{K}\operatorname*{Tr}%
\left(  \mathbf{A}_{k,\ell}\mathbf{Q}_{k}\right)  \leq P_{\ell}$ for
$\ell=1,\ldots,D$, where $\left\{  \mathbf{A}_{k,\ell}\right\}  $ are positive
semidefinite matrices (possibly $\mathbf{A}_{k,\ell}=\mathbf{0}$ if $k$-th
user doesn't belong to $\ell$-th constraint) and $\left\{  P_{\ell}\right\}  $
are non-negative coefficients. This definition includes some important special
cases studied in the literature, such as per-antenna, per-group of antennas,
the classical sum-power or the interference constraints in cognitive radio scenarios.
\end{enumerate}

Throughout the paper, we refer to problem (\ref{eq: Problem_auxiliar_i}) as
$\left(  \mathrm{P}\right)  $. We next provide examples of\ problems that
satisfy these assumptions.

\subsection{Examples}

\label{marker: example Utility}

An example of cost function included in our framework is the $\alpha$-fairness
criterion \cite{bib: alfa fairness}: $f\left(  \left\{  \mathbf{Q}%
_{k}\right\}  \right)  =\sum_{k=1}^{K}-w_{k}f_{\alpha}\left(  r_{k}\left(
\left\{  \mathbf{Q}_{k}\right\}  \right)  \right)  $, where $w_{k}$ is a
positive constant, $f_{\alpha}$\ is an increasing strictly concave function
defined as%
\begin{equation}
f_{\alpha}\left(  r\right)  \colon=\left\{
\begin{array}
[c]{lc}%
\log r & \text{if }\alpha=1\\
\left(  1-\alpha\right)  ^{-1}r^{1-\alpha} & \text{otherwise}%
\end{array}
\text{,}\right.
\end{equation}
and $r_{k}\left(  \left\{  \mathbf{Q}_{k}\right\}  \right)  $ is the $k$-th
user's rate, which depends on covariance matrices $\left\{  \mathbf{Q}%
_{k}\right\}  $ and on the \textit{channel scenario}. The $\alpha$-fairness
criterion reduces, as special cases, to the WSRM problem ($\alpha=0$) or
the\ \textit{proportional fairness} problem ($\alpha=1$). Moreover, as
$\alpha$ becomes large, it converges to the\ \textit{max--min fairness}
problem \cite{bib: alfa fairness}.\newline In the following we present some
examples of channel scenario that can be addressed within our framework:

\begin{itemize}
\item \textit{Parallel MISO IC}: The $k$-th transmitter operates over $L_{C}$
parallel subcarriers, it has power constraint $P_{k}$ and has knowledge of
channels $\left\{  \mathbf{\mathbf{h}}_{jkl}\right\}  $ for $j=1,\ldots,K$ and
$\forall l$.\newline The minimization of the $\left(  p,\alpha\right)
$-fairness cost function reads%
\begin{gather}
\underset{\left\{  r_{k}\right\}  ,\left\{  \mathbf{Q}_{kl}\succcurlyeq
\mathbf{0}\right\}  }{\min}\sum_{k=1}^{K}-w_{k}f_{\alpha}\left(  r_{k}\right)
\label{eq: IC WSRM}\\
s.t.\left\{
\begin{array}
[c]{ll}%
r_{k}\leq\sum\limits_{l=1}^{L_{C}}\log\left(  1\mathbf{+}\frac
{\mathbf{\mathbf{h}}_{kkl}\mathbf{Q}_{kl}\mathbf{\mathbf{h}}_{kkl}^{H}}%
{\sigma_{kl}^{2}\mathbf{+}%
%TCIMACRO{\tsum _{j=1,j\neq k}^{K}}%
%BeginExpansion
{\textstyle\sum_{j=1,j\neq k}^{K}}
%EndExpansion
\mathbf{\mathbf{h}}_{jkl}\mathbf{Q}_{jl}\mathbf{\mathbf{h}}_{jkl}^{H}}\right)
& \forall k\\
\operatorname{Tr}\left(  \sum_{l=1}^{L_{C}}\mathbf{Q}_{kl}\right)  \leq P_{k}
& \forall k
\end{array}
\right. \nonumber
\end{gather}
Defining\ $\left[  \mathbf{\mathbf{f}}_{\mathbf{i}}\left(  \left\{
\mathbf{Q}_{kl}\right\}  \right)  \right]  _{l+L_{C}\left(  k-1\right)  }%
=\sum\nolimits_{j=1,j\neq k}^{K}\mathbf{\mathbf{h}}_{jkl}\mathbf{Q}%
_{jl}\mathbf{\mathbf{h}}_{jkl}^{H}$\ \ $\forall k,l$ and $\mathcal{Q}=\left\{
\mathbf{Q}_{kl}\succcurlyeq\mathbf{0}\text{ }\forall k,l\text{\ }|\text{
}\operatorname{Tr}\left(  \sum_{l=1}^{L_{C}}\mathbf{Q}_{kl}\right)  \leq
P_{k}\text{\ }\forall k\right\}  $, problem (\ref{eq: IC WSRM}) is recast
into\ $\left(  \mathrm{P}\right)  $. Also, $\mathbf{\mathbf{f}}_{\mathbf{i}%
}\left(  \left\{  \mathbf{Q}_{kl}\right\}  \right)  \in\left[  \mathbf{0,i}%
^{\max}\right]  $ where $\mathbf{i}^{\max}$ is a proper upper bound on
interference, always available since $\mathcal{Q}$ is bounded (finite power constraints).

\item \textit{Parallel MISO\ BC}: This scenario is obtained from
(\ref{eq: IC WSRM}) by setting $\mathbf{h}_{jkl}=\mathbf{h}_{kl}$\ $\forall j$
and imposing a sum-power constraint $\operatorname{Tr}\left(  \sum_{k=1}%
^{K}\sum_{l=1}^{L_{C}}\mathbf{Q}_{kl}\right)  \leq P_{tot}$.
\end{itemize}

\section{Problem Solution via Branch-and-Bound\label{Sec_algorithm}}

In this section we show that, adopting standard BB techniques (see \cite{bib:
boyd 1991} and\ \cite{bib: tuy decomposition}), problem $\left(
\mathrm{P}\right)  $ can be optimally solved by means of an efficient BB that
exploits the structure dictated by assumptions (A1-A3). The BB algorithm is
fully characterized\ by two procedures: \textit{branching} and
\textit{bounding}. These are iteratively performed until the solution's
suboptimality falls below some prescribed accuracy $\varepsilon$. In the
following we explicitly tailor those procedures to problem $\left(
\mathrm{P}\right)  $ and we show convergence of the proposed BB algorithm to
the global optimal solution of $\left(  \mathrm{P}\right)  $. For
readability's sake, we define $\mathbf{Q}\colon=\left\{  \mathbf{Q}%
_{k}\right\}  $\footnote{Here and in the following, the expression
$\mathbf{Q}\in\mathcal{Q}$ stands for $\left\{  \mathbf{Q}_{k}\right\}
\in\mathcal{Q}$.} and we address an interval as $\mathcal{M}\colon=\left[
\mathbf{a},\mathbf{b}\right]  $, meaning that $\mathbf{c}\in\mathcal{M}%
\Leftrightarrow\left[  \mathbf{a}\right]  _{l}\leq\left[  \mathbf{\mathbf{c}%
}\right]  _{l}\leq\left[  \mathbf{b}\right]  _{l}$ for $l=1,\ldots,L$.

\subsection{Branching Procedure}

A partition set $\mathcal{P}_{t}$ of rectangles $\left\{  \mathcal{M}\right\}
$ in the space $\mathbb{R}^{L}$, each labeled with a \textit{lower}
$\mathfrak{L}_{\mathfrak{B}}\left(  \mathcal{M}\right)  $ and \textit{upper}
$\mathfrak{U}_{\mathfrak{B}}\left(  \mathcal{M}\right)  $ bounds, is given. By
splitting a rectangle that satisfies $\mathcal{M}_{t}\in\arg\min
_{\mathcal{M}\in\mathcal{P}_{t}}\mathfrak{L}_{\mathfrak{B}}\left(
\mathcal{M}\right)  $ in $J$ non-overlapping sub-rectangles $\left\{
\mathcal{\hat{M}}_{t}\right\}  $ (i.e., $\bigcap\nolimits_{j=1}^{J}%
\mathcal{\hat{M}}_{t}^{\left(  j\right)  }=\mathcal{\emptyset}$ and
$\bigcup\nolimits_{j=1}^{J}\mathcal{\hat{M}}_{t}^{\left(  j\right)
}=\mathcal{M}_{t}$), the enhanced partition $\mathcal{P}_{t+1}\triangleq
\left\{  \mathcal{P}_{t}\backslash\mathcal{M}_{t}\right\}  \cup\left\{
\mathcal{\hat{M}}_{t}\right\}  $ is obtained. Lower and upper bounds for each
sub-rectangle in $\left\{  \mathcal{\hat{M}}_{t}\right\}  $ are then obtained
via the following \textit{bounding procedure}.

\subsection{Bounding Procedure}

Exploiting the \textit{Partly Convex-Monotone} structure of problem $\left(
\mathrm{P}\right)  $, for every rectangle $\mathcal{M}=\left[  \mathbf{i}%
^{\min},\mathbf{i}^{\max}\right]  \in\mathcal{P}_{t}$, a \textit{lower bound}
$\mathfrak{L}_{\mathfrak{B}}\left(  \mathcal{M}\right)  $ is evaluated by
solving the following problem:%
\begin{align}
\mathfrak{L}_{\mathfrak{B}}\left(  \mathcal{M}\right)  \colon &  =\text{
}\underset{\mathbf{Q}\in\mathcal{Q}}{\min}f\left(  \mathbf{Q},\mathbf{i}%
^{\min}\right) \label{eq: Pbeta}\\
s.t.\text{\ } &  \mathbf{i}^{\min}\preccurlyeq\mathbf{\mathbf{\mathbf{f}%
}_{\mathbf{i}}\left(  \mathbf{Q}\right)  }\preccurlyeq\mathbf{i}^{\max
}\text{.}\nonumber
\end{align}
Thanks to assumptions (A1-A3) two fundamental results can be
verified:\ (\textit{i}) problem (\ref{eq: Pbeta}) is \textit{convex} since the
cost function $f\left(  \mathbf{Q},\mathbf{i}\right)  $ is convex for a fixed
$\mathbf{i}$ and the constraints form a convex set, (\textit{ii}) using
standard convex optimization arguments, it can be shown that this bounding
procedure satisfies the natural condition:\vspace{-0.07cm}%
\begin{equation}
\mathcal{M}^{\prime}\subset\mathcal{M}\Rightarrow\mathfrak{L}_{\mathfrak{B}%
}\left(  \mathcal{M}^{\prime}\right)  \geq\mathfrak{L}_{\mathfrak{B}}\left(
\mathcal{M}\right)  .\label{eq: bound condition}%
\end{equation}
Moreover, denoting with $\mathbf{Q}^{\left(  \mathfrak{L}_{\mathfrak{B}%
}\right)  }$ the optimal solution of problem (\ref{eq: Pbeta}), a valid upper
bound $\mathfrak{U}_{\mathfrak{B}}\left(  \mathcal{M}\right)  $ is obtained by
evaluating the function at $\mathbf{Q}^{\left(  \mathfrak{L}_{\mathfrak{B}%
}\right)  }$, i.e., $\mathfrak{U}_{\mathfrak{B}}\left(  \mathcal{M}\right)
\colon=f\left(  \mathbf{Q}^{\left(  \mathfrak{L}_{\mathfrak{B}}\right)
},\mathbf{\mathbf{\mathbf{f}}_{\mathbf{i}}}\left(  \mathbf{Q}^{\left(
\mathfrak{L}_{\mathfrak{B}}\right)  }\right)  \right)  $.\newline Finally, the
algorithm checks if the prescribed accuracy is met (i.e., if $\min
\mathfrak{U}_{\mathfrak{B}}\left(  \mathcal{M}\right)  -\min\mathfrak{L}%
_{\mathfrak{B}}\left(  \mathcal{M}\right)  \leq\varepsilon$) otherwise it goes
back to the \textit{branching procedure}.

\subsection{Convergence Analysis}

Here we proves convergence of the proposed BB algorithm.

\begin{lemma}
\label{lemma 1} The proposed BB algorithm (which is performed in the
reduced space spanned by interference levels/variable $\mathbf{i}$), is
\textit{convergent} to a global optimal solution of problem $\left( \mathrm{P%
}\right) $.
\end{lemma}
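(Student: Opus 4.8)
The plan is to establish convergence by verifying that the proposed branch-and-bound scheme satisfies the standard sufficient conditions for BB convergence, as laid out in the general framework of \cite{bib: boyd 1991}. A generic BB algorithm converges to the global optimum provided three ingredients hold: the lower bound is valid (it never exceeds the true optimal value over each rectangle), the upper bound is valid (it is attained by a feasible point), and the \emph{bounding is consistent} in the sense that the gap $\mathfrak{U}_{\mathfrak{B}}(\mathcal{M})-\mathfrak{L}_{\mathfrak{B}}(\mathcal{M})$ can be driven to zero as the rectangle $\mathcal{M}$ shrinks. I would organize the argument around these three points, treating the reduced interference space $\mathbb{R}^{L}$ as the branching domain.

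First I would establish validity of the lower bound. For any rectangle $\mathcal{M}=[\mathbf{i}^{\min},\mathbf{i}^{\max}]$, consider the restriction of problem $\left(\mathrm{P}\right)$ to feasible points whose interference vector lies in $\mathcal{M}$, i.e. those $\mathbf{Q}\in\mathcal{Q}$ with $\mathbf{i}^{\min}\preccurlyeq\mathbf{f}_{\mathbf{i}}(\mathbf{Q})\preccurlyeq\mathbf{i}^{\max}$. By the monotonicity of $f$ in $\mathbf{i}$ (A2), replacing the true interference $\mathbf{f}_{\mathbf{i}}(\mathbf{Q})$ by its lower corner $\mathbf{i}^{\min}$ can only decrease the objective, so $f(\mathbf{Q},\mathbf{i}^{\min})\leq f(\mathbf{Q},\mathbf{f}_{\mathbf{i}}(\mathbf{Q}))$ for every such $\mathbf{Q}$. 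Minimizing the left side over the same feasible set gives exactly (\ref{eq: Pbeta}), which is convex by point (\textit{i}) of the bounding procedure and hence solvable to global optimality; this shows $\mathfrak{L}_{\mathfrak{B}}(\mathcal{M})$ lower-bounds the optimal value of $\left(\mathrm{P}\right)$ restricted to $\mathcal{M}$. Validity of the upper bound is immediate: $\mathbf{Q}^{(\mathfrak{L}_{\mathfrak{B}})}$ is feasible for $\left(\mathrm{P}\right)$, so $\mathfrak{U}_{\mathfrak{B}}(\mathcal{M})=f\!\left(\mathbf{Q}^{(\mathfrak{L}_{\mathfrak{B}})},\mathbf{f}_{\mathbf{i}}(\mathbf{Q}^{(\mathfrak{L}_{\mathfrak{B}})})\right)$ is an attainable objective value, giving a global upper bound on the optimum of $\left(\mathrm{P}\right)$.

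The main obstacle, and the crux of the argument, is establishing consistency of the bounding, namely that the selected rectangle's gap vanishes along the branching sequence. Here I would exploit the specific structure of the gap: both bounds are built from the \emph{same} minimizer $\mathbf{Q}^{(\mathfrak{L}_{\mathfrak{B}})}$, so the gap equals $f\!\left(\mathbf{Q}^{(\mathfrak{L}_{\mathfrak{B}})},\mathbf{f}_{\mathbf{i}}(\mathbf{Q}^{(\mathfrak{L}_{\mathfrak{B}})})\right)-f\!\left(\mathbf{Q}^{(\mathfrak{L}_{\mathfrak{B}})},\mathbf{i}^{\min}\right)$, a difference of $f$ evaluated at two interference vectors that both lie in $\mathcal{M}$. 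By feasibility, $\mathbf{i}^{\min}\preccurlyeq\mathbf{f}_{\mathbf{i}}(\mathbf{Q}^{(\mathfrak{L}_{\mathfrak{B}})})\preccurlyeq\mathbf{i}^{\max}$, so the two arguments differ by at most the diameter of $\mathcal{M}$. The argument would then invoke a standard \emph{exhaustive subdivision} property of the branching rule: if one always splits the rectangle with the smallest lower bound (and bisects, say, along the longest edge), the diameter of the rectangles examined tends to zero. Combined with the \emph{uniform continuity} of $f$ on the compact set $\mathcal{Q}\times[\mathbf{i}^{\min}_{0},\mathbf{i}^{\max}_{0}]$ (A2 gives continuity, compactness of $\mathcal{Q}$ follows from A3 together with the boundedness in A1), this forces the gap to vanish. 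The delicate point to handle carefully is that it is the rectangle \emph{attaining the minimal lower bound} whose diameter must shrink — this requires the monotonicity property (\ref{eq: bound condition}) so that partitioning never worsens lower bounds, and the exhaustiveness of the subdivision so that no subregion is neglected indefinitely. Once consistency is in place, the terminating condition $\min\mathfrak{U}_{\mathfrak{B}}(\mathcal{M})-\min\mathfrak{L}_{\mathfrak{B}}(\mathcal{M})\leq\varepsilon$ is met in finitely many iterations, and letting $\varepsilon\to 0$ yields convergence to the global optimum, completing the proof.
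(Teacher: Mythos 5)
Your proof is correct and follows the same overall strategy as the paper (verify the standard branch-and-bound convergence conditions of \cite{bib: boyd 1991}: valid lower/upper bounds plus consistency of the bounding as rectangles shrink), but the key consistency step is executed differently. The paper introduces the parametric value function $\mathfrak{F}(\widehat{\mathbf{i}})=\min_{\mathbf{Q}\in\mathcal{Q}}\{f(\mathbf{Q},\widehat{\mathbf{i}}):\mathbf{i}^{\min}\preccurlyeq\mathbf{f}_{\mathbf{i}}(\mathbf{Q})\preccurlyeq\mathbf{i}^{\max}\}$, identifies $\mathfrak{L}_{\mathfrak{B}}=\mathfrak{F}(\mathbf{i}^{\min})$ and $\mathfrak{U}_{\mathfrak{B}}=\mathfrak{F}(\mathbf{i}^{\max})$, and concludes via continuity of $\mathfrak{F}$ in $\widehat{\mathbf{i}}$ (inherited from joint continuity of $f$). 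You instead exploit the fact that both bounds are evaluated at the \emph{same} minimizer $\mathbf{Q}^{(\mathfrak{L}_{\mathfrak{B}})}$, so the gap is exactly $f\bigl(\mathbf{Q}^{(\mathfrak{L}_{\mathfrak{B}})},\mathbf{f}_{\mathbf{i}}(\mathbf{Q}^{(\mathfrak{L}_{\mathfrak{B}})})\bigr)-f\bigl(\mathbf{Q}^{(\mathfrak{L}_{\mathfrak{B}})},\mathbf{i}^{\min}\bigr)$ with both interference arguments inside $\mathcal{M}$, and you control it by uniform continuity of $f$ itself. Your route is more elementary and arguably tighter: it avoids having to justify continuity of an optimal value function (which the paper asserts rather quickly and which, in general, requires compactness and some regularity of the feasible set), and it bounds the actual gap of the algorithm rather than the looser quantity $\mathfrak{F}(\mathbf{i}^{\max})-\mathfrak{F}(\mathbf{i}^{\min})$. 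You also make explicit the role of exhaustive subdivision of the rectangle attaining the minimal lower bound, which the paper leaves implicit in its appeal to ``standard arguments.'' One small inaccuracy: you claim compactness of $\mathcal{Q}$ follows from A3 together with A1, but A1 only bounds the interference map $\mathbf{f}_{\mathbf{i}}$ on $\mathcal{Q}$, not $\mathcal{Q}$ itself; boundedness of $\mathcal{Q}$ must be assumed (as it holds in all the paper's examples via the power constraints) — a gap the paper's own argument shares implicitly.
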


\begin{proof}
As explained above, since the chosen bounding procedure satisfyies (\ref{eq:
bound condition}), the BB algorithm generates a sequences of partition
sets $\left\{ \mathcal{M}_{t}\right\} $ collapsing to a point $%
\bigcap_{t\rightarrow \infty }\mathcal{M}_{t}=\mathbf{i}^{\mathbf{\ast }}$
(recall that $\mathcal{M}_{t}$ is the rectangle selected for splitting at
the $t$-th branching iteration). In order to prove convergence we need to
show that, as the size of rectangle $\mathcal{M}_{t}$ gets smaller, $%
\mathfrak{U}_{\mathfrak{B}}\left( \mathcal{M}_{t}\right) -\mathfrak{L}_{%
\mathfrak{B}}\left( \mathcal{M}_{t}\right) $ is also sufficiently small. The
proof follows standard arguments \cite{bib: boyd 1991}. This is shown in
Appendix.
\end{proof}

\subsection{Broadcast WSRM Example}

Considering the BC WSRM scenario (see Sec.\ref{marker: example Utility}), for
a given interval $\mathcal{M}=\left[  \mathbf{i}^{\min},\mathbf{i}^{\max
}\right]  $, the evaluation of a lower bound $\mathfrak{L}_{\mathfrak{B}}$
results in the following convex problem%
\begin{align}
\mathfrak{L}_{\mathfrak{B}}\left(  \mathcal{M}\right)  \colon &  =\text{
}\underset{\left\{  \mathbf{Q}_{k}\succcurlyeq\mathbf{0}\right\}  }{\min}%
\sum_{k=1}^{K}-w_{k}\log\left(  1\mathbf{+}\frac{\mathbf{\mathbf{h}}%
_{k}\mathbf{Q}_{k}\mathbf{\mathbf{h}}_{k}^{H}}{\sigma_{k}^{2}\mathbf{+}%
i_{k}^{\min}}\right) \label{eq: LB}\\
s.t.\text{ } &  \left\{
\begin{array}
[c]{lc}%
\operatorname{Tr}\left(  \sum_{k=1}^{K}\mathbf{Q}_{k}\right)  \leq P_{tot} &
\\
i_{k}^{\min}\leq\mathbf{\mathbf{h}}_{k}\left(
%TCIMACRO{\tsum \nolimits_{j\neq k}}%
%BeginExpansion
{\textstyle\sum\nolimits_{j\neq k}}
%EndExpansion
\mathbf{Q}_{j}\right)  \mathbf{\mathbf{h}}_{k}^{H}\leq i_{k}^{\max} & \forall
k
\end{array}
\text{.}\right. \nonumber
\end{align}
Defining $\left\{  \mathbf{Q}_{k}^{\ast}\right\}  $ the optimal solution of
(\ref{eq: LB}),\ a valid upper bound is given by $\mathfrak{U}_{\mathfrak{B}%
}\left(  \mathcal{M}\right)  \colon=\sum_{k=1}^{K}-w_{k}\log\left(
1\mathbf{+}\frac{\mathbf{\mathbf{h}}_{k}\mathbf{Q}_{k}^{\ast}%
\mathbf{\mathbf{h}}_{k}^{H}}{\sigma_{k}^{2}\mathbf{+}i_{k}}\right)  $ where
$i_{k}=\mathbf{\mathbf{h}}_{k}\left(
%TCIMACRO{\tsum \nolimits_{j\neq k}}%
%BeginExpansion
{\textstyle\sum\nolimits_{j\neq k}}
%EndExpansion
\mathbf{Q}_{j}^{\ast}\right)  \mathbf{\mathbf{h}}_{k}^{H}$ $\forall k$.

\section{Suboptimal Solution\label{Sec_algorithm_DD}}

While the proposed BB algorithm always converges to the global optimal
solution and has reduced complexity with respect to a general-purpose
implementation of BB, it is still feasible only for offline simulation. In
this section we propose a suboptimal algorithm with polynomial complexity that
extends the distributed pricing schemes of \cite{bib: berry honig pricing
MISO} to the more general class of problem $\left(  \mathrm{P}\right)
$.\newline Exploiting the \textit{Partly Convex-Monotone} structure, problem
(\ref{eq: Problem_auxiliar_i}) can be equivalently reformulated as the
non-convex problem:%
\begin{equation}
\min_{\mathbf{i}\in\mathcal{M}_{0}}\sup_{\mathbf{\lambda}}\left[
\underset{\mathbf{Q}\in\mathcal{Q}}{\min}\left[  f\left(  \mathbf{Q}%
,\mathbf{i}\right)  +\mathbf{\lambda}^{T}\mathbf{\mathbf{f}}_{\mathbf{i}%
}\left(  \mathbf{Q}\right)  \right]  -\mathbf{\lambda}^{T}\mathbf{i}\right]
\text{.}\label{eq: convex opt g}%
\end{equation}
where $\mathbf{\lambda}$\ is the Lagrange multiplier associated to the affine
constraint $\mathbf{i}=\mathbf{\mathbf{f}}_{\mathbf{i}}\left(  \mathbf{Q}%
\right)  $. Building on (\ref{eq: convex opt g}), in the following table we
formalize the proposed suboptimal algorithm.

\begin{algorithm} \caption{\ -
Suboptimally solve problem $\left( \mathrm{P}\right) $} \label{algo1}
\emph{0}: \ \ \emph{Set} $\varepsilon _{\mathbf{\lambda }},\varepsilon _{%
\mathbf{i}}$
\newline
\emph{1}: \ \ \emph{Initialize} $\mathbf{\lambda}=\widehat{\mathbf{\lambda }%
}$
\newline
\emph{2}: \ \ \emph{Initialize} $\mathbf{i}=\widehat{\mathbf{i}}$
\newline
\emph{3}: \ \ \emph{Evaluate }$\mathbf{Q}^{\mathbf{\ast }}=\arg \underset{%
\mathbf{Q}\in \mathcal{Q}}{\min }f\left( \mathbf{Q},\widehat{\mathbf{i}}%
\right) +\widehat{\mathbf{\lambda }}^{T}\mathbf{f}_{\mathbf{i}}\left(
\mathbf{Q}\right)$
\newline
\emph{4: \ \ }\textbf{If }$\left\Vert \widehat{\mathbf{i}}-\mathbf{f}_{%
\mathbf{i}}\left( \mathbf{Q}^{\mathbf{\ast }}\right) \right\Vert
>\varepsilon _{\mathbf{i}}$
\newline
\emph{5: \ \ }\qquad \emph{Update} $\widehat{\mathbf{i}}=\mathbf{f}_{\mathbf{i}}\left( \mathbf{Q}^{\mathbf{\ast }}\right) $
\newline
\emph{6: \ \ }\qquad \emph{Go back to step 3}
\newline
\emph{7}: \ \textbf{elseIf }$\left\Vert \widehat{\mathbf{\lambda }}-\left.
\frac{\partial f\left( \mathbf{Q},\mathbf{i}\right) }{\partial \mathbf{i}}%
\right\vert _{\mathbf{Q}=\mathbf{Q}^{\ast },\mathbf{i}=\widehat{\mathbf{i}}%
}\right\Vert >\varepsilon _{\mathbf{\lambda }}$
\newline
\emph{8: \ \ }\qquad \emph{Update} $\widehat{\mathbf{\lambda }}=\left. \frac{%
\partial f\left( \mathbf{Q},\mathbf{i}\right) }{\partial \mathbf{i}}%
\right\vert _{\mathbf{Q}=\mathbf{Q}^{\ast },\mathbf{i}=\widehat{\mathbf{i}}}$
\newline
\emph{9:\ \ \ }\qquad \emph{Go back to step 2}
\newline
\emph{10}: \textbf{end}
\end{algorithm}

$ $\newline
Since a stationary point of this algorithm fulfills
the \textit{necessary} Karush-Kuhn-Tucker (KKT) conditions of problem
(\ref{eq: Problem_auxiliar_i}), if the algorithm converges, it attains a local
optimal point of problem\ (\ref{eq: Problem_auxiliar_i}).\newline It is worth
noticing that, by specializing our framework to the case when the cost
function $f\left(  \mathbf{Q},\mathbf{i}\right)  $ is the WSRM (i.e.,
$f_{\alpha}\left(  r_{k}\right)  =r_{k}$ $\forall k$ in (\ref{eq: IC WSRM}))
the Lagrangian multiplier $\mathbf{\lambda}$ plays the role of the
\textit{interference prices} defined in the distributed pricing algorithm
\cite{bib: berry honig pricing MISO}. Thus Algorithm \ref{algo1} can be seen
as a generalization of distributed pricing technique with an arbitrarily cost
function and arbitrary \textit{interference} functions
(satisfying\ assumptions A1-A3).\newline Finally, since the problem at hand is
non-convex, initialization of the parameters $\mathbf{\lambda}$ and
$\mathbf{i}$ results crucial for performances and convergence. In
Sec.\ref{sec:num res} we assess the performances of this technique in relation
to the global optimal solution evaluated via BB algorithm.

\section{Numerical Results}

\label{sec:num res}

We assess the performance of the two proposed techniques: a multi-user linear
precoder optimized via (\textit{i}) the efficient BB algorithm (BB - LB and
UB); (\textit{ii}) the suboptimal Algorithm \ref{algo1}.\newline We consider
the sum-rate utility function (i.e., in (\ref{eq: IC WSRM}), $f_{\alpha
}\left(  r_{k}\right)  =r_{k}$ $\forall k$ and $w_{k}=1$ $\forall k$). In BB
algorithm, the solution's accuracy is $\varepsilon=10^{-3}$, while, in
Algorithm \ref{algo1}, we run two different price initializations
($\lambda_{k}=10^{-5}$ $\forall k$ and $\lambda_{k}=1$ $\forall k$) and for
both we initialize $i_{k}=\sigma_{k}^{2}$ $\forall k$, selecting $\sigma
_{k}^{2}=1$ at each receive antenna.%
%TCIMACRO{\FRAME{ftbpFU}{3.352in}{2.5598in}{0pt}{\Qcb{Sum-Rate versus
%transmitting power for the single-carrier ($L_{C}=1$) BC scenario. The figure
%compares the two proposed linear precoding algorithms: the optimal BB
%algorithm (BB - LB and UB) and the suboptimal Algorithm \ref{algo1}
%considering two different initializations ($\lambda_{k}=10^{-5}$ $\forall k$
%and $\lambda_{k}=1$ $\forall k$). The optimal \QTR{it}{non-linear} DPC
%technique is also plotted as a reference.}}{\Qlb{fig. 1}}%
%{sum_capacity_h_randn_4x4_004_v4_now.eps}%
%{\special{ language "Scientific Word";  type "GRAPHIC";
%maintain-aspect-ratio TRUE;  display "USEDEF";  valid_file "F";
%width 3.352in;  height 2.5598in;  depth 0pt;  original-width 5.0678in;
%original-height 3.87in;  cropleft "0";  croptop "1";  cropright "1.0029";
%cropbottom "0";
%filename '';file-properties "XNPEU";}}
%}%
%BeginExpansion
\begin{figure}
[ptb]
\begin{center}
\includegraphics[
trim=0.000000in 0.000000in -0.014697in 0.000000in,
height=2.5598in,
width=3.352in
]%
{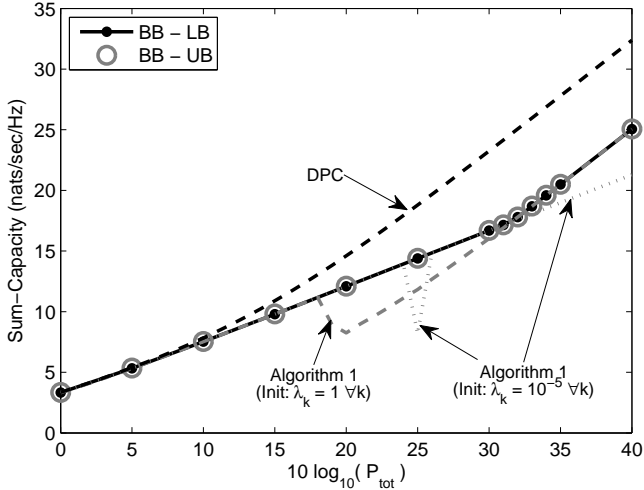}%
\caption{Sum-Rate versus transmitting power for the single-carrier ($L_{C}=1$)
BC scenario. The figure compares the two proposed linear precoding algorithms:
the optimal BB algorithm (BB - LB and UB) and the suboptimal Algorithm
\ref{algo1} considering two different initializations ($\lambda_{k}=10^{-5}$
$\forall k$ and $\lambda_{k}=1$ $\forall k$). The optimal \textit{non-linear}
DPC technique is also plotted as a reference.}%
\label{fig. 1}%
\end{center}
\end{figure}
%EndExpansion
\newline Fig.\ref{fig. 1} shows the sum-rate versus the transmitting power for
a single-carrier ($L_{C}=1$) BC channel\footnote{Due to space limitation,
Fig.\ref{fig. 1} channels realization is available at:
\texttt{http://web.njit.edu/\symbol{126}mr227/papers/paper\_BB\_H\_BC.mat}}
where a $N=4$ transmit antennas base-station serves $K=4$ single-antennas
users, subject to a sum-power constraint, $\operatorname{Tr}\left(  \sum
_{k=1}^{K}\mathbf{Q}_{k}\right)  \leq P_{tot}$. The sum-capacity achieving
\textit{non-linear} technique Dirty Paper Coding (DPC) is also plotted as a
reference.\newline It can be noticed that the suboptimal Algorithm
\ref{algo1}, while showing near-optimal performance at several power levels,
happens to be quite sensitive to initialization. For instance, initialization
$\lambda_{k}=10^{-5}$ $\forall k$ yields a suboptimal slope in\ high power
regime, as observed for\ $\left.  P_{tot}\right\vert _{dB}>31$, and, at
$\left.  P_{tot}\right\vert _{dB}=25dB$ both initializations lead to highly
suboptimal performances. A last observation pertains to the significant gains
of \textit{non-linear} DPC with respect to \textit{linear} precoding at high
power regime.\newline Finally, not to confuse the reader, since a utility
function (sum-rate) instead of a cost function is plotted, in fig.\ref{fig. 1}%
, the lower bound results as the maximum feasible value while the upper bound
is the maximum upper bound among BB partitions.

\section{Conclusions}

This work presents a global optimization framework for the minimization of
non-convex cost functions in MISO BC and IC channels. Examples are given for
the general $\alpha$-fairness optimization considering parallel IC and BC
channels. Knowing the global optimal solution, even if impractical for
real-time implementation, allows to assess the quality and to fine-tune (e.g.,
initialize) suboptimal schemes. In addition to the global optimal BB, we have
proposed a real-time, hence suboptimal, algorithm that generalizes the pricing
scheme of \cite{bib: berry honig pricing MISO}. Extensions to MIMO networks
are the subject of future work.

\section{Appendix}

We need to prove that, as the maximum length of the edges of $\mathcal{M}_{t}
$, denoted by $size(\mathcal{M}_{t})$, goes to zero, the difference between
upper and lower bounds uniformly converges to zero, i.e., $\forall
\varepsilon>0\ \exists\delta>0\ \forall\mathcal{M}_{t}\subseteq\mathcal{M}%
_{0}$ $\ size(\mathcal{M}_{t})\leq\delta\Longrightarrow\mathfrak{U}%
_{\mathfrak{B}}(\mathcal{M}_{t})-\mathfrak{L}_{\mathfrak{B}}(\mathcal{M}%
_{t})\leq\varepsilon$.\newline For each $\widehat{\mathbf{i}}\in
\mathcal{M}_{t}=\left[  \mathbf{i}^{\min},\mathbf{i}^{\max}\right]  $, we
define the function $\mathfrak{F}\left(  \widehat{\mathbf{i}}\right)  $ as the
result of the following constraint optimization problem:%
\begin{gather*}
\underset{\mathbf{Q}\in\mathcal{Q}}{\min}f\left(  \mathbf{Q},\widehat
{\mathbf{i}}\right) \\
s.t.\text{ \ }\mathbf{i}^{\min}\preccurlyeq\mathbf{\mathbf{\mathbf{f}%
}_{\mathbf{i}}\left(  \mathbf{Q}\right)  }\preccurlyeq\mathbf{i}^{\max
}\text{.}%
\end{gather*}
Using this notation, the lower bound in (\ref{eq: Pbeta}) is given by
$\mathfrak{L}_{\mathfrak{B}}=\mathfrak{F}\left(  \mathbf{i}^{\min}\right)  $,
while an upper bound is given by $\mathfrak{U}_{\mathfrak{B}}=\mathfrak{F}%
\left(  \mathbf{i}^{\max}\right)  $.\newline From jointly-continuity of the
function $f\left(  \mathbf{Q},\widehat{\mathbf{i}}\right)  $ with respect to
$\left(  \mathbf{Q},\widehat{\mathbf{i}}\right)  $ (assumption A2) and from
the definition of $\mathfrak{F}\left(  \widehat{\mathbf{i}}\right)  $, we have
that $\mathfrak{F}\left(  \widehat{\mathbf{i}}\right)  $ is continuous in the
norm of $\widehat{\mathbf{i}}$ (i.e., $\left\Vert \widehat{\mathbf{i}%
}\right\Vert $). It follows that also $\mathfrak{L}_{\mathfrak{B}}$ and
$\mathfrak{U}_{\mathfrak{B}}$ will result continue in $\left\Vert
\widehat{\mathbf{i}}\right\Vert $, thus it holds\vspace{-0.15cm}%
\[
\forall\varepsilon\text{ }\exists\delta\ \ \ \left\Vert \mathbf{i}^{\max
}-\mathbf{i}^{\min}\right\Vert \leq\delta\Longrightarrow\left\vert
\mathfrak{F}\left(  \mathbf{i}^{\max}\right)  -\mathfrak{F}\left(
\mathbf{i}^{\min}\right)  \right\vert \leq\varepsilon
\]
\vspace{-0.3cm}concluding the proof.


\begin{thebibliography}{9}                                                                                                %
\bibitem {bib: Trans Power}F. Rashid-Farrokhi, K. R. Liu, and L. Tassiulas,
\textquotedblleft Transmit beamforming and power control for cellular wireless
systems,\textquotedblright\ in \textit{IEEE Journal on Selected Areas of
Communications}, vol. 16, pp. 1437--1450, Oct. 1998.

\bibitem {bib: boche sinr}M. Schubert and H. Boche, \textquotedblleft Solution
of the multiuser downlink beamforming problem with individual SINR
constraints,\textquotedblright\ \textit{IEEE Trans. Veh. Technol.}, vol. 53,
no. 1, pp. 18--28, Jan. 2004.

\bibitem {bib: complexity and duality}Z.-Q. Luo and S. Zhang,
\textquotedblleft Dynamic spectrum management: Complexity and
duality,\textquotedblright\ \textit{IEEE Journal of Selected Topics in Signal
Processing}, vol. 2, pp. 57-73, February 2008.

\bibitem {bib: berry honig pricing MISO}C. Shi, R. A. Berry, and M. L. Honig,
\textquotedblleft Distributed interference pricing with MISO
channels,\textquotedblright\ in Proc. \textit{46th Annual Allerton Conference
2008}, Sep. 2008.

\bibitem {bib: boyd 1991}V. Balakrishnan, S. Boyd, and S. Balemi.
\textquotedblleft Branch and bound algorithm for computing the minimum
stability degree of parameter-dependent linear systems.\textquotedblright%
\ \textit{Int. Journal of Robust and Nonlinear Control}, 1(4):295--317,
October--December 1991.

\bibitem {bib: ngoc}Y. Xu, T. Le-Ngoc, and S. Panigrahi, \textquotedblleft
Global concave minimization for optimal spectrum balancing in multi-user DSL
networks,\textquotedblright\ \textit{IEEE Trans. on Signal Processing}, vol.
56, no. 7, pp. pp. 2875--2885, Jul. 2008.

\bibitem {bib: tuy decomposition}H. Tuy, \textquotedblleft On a decomposition
method for nonconvex global optimization,\textquotedblright%
\ in\ \textit{Optimization Letters}, vol. 1, no. 3, pp. 245-258, 2007.

\bibitem {bib: alfa fairness}J. Mo and J. Walrand, \textquotedblleft Fair
end-to-end window-based congestion control,\textquotedblright%
\ in\ \textit{IEEE/ACM Transactions on Networking}, 8(5):556--567, October 2000.

\bibitem {bib: boyd}S. Boyd and L. Vandenberghe, \textit{Convex Optimization}.
Cambridge University Press, Cambridge, UK: 2003.
\end{thebibliography}
\end{document}